\newtheorem{theorem}{Theorem}
\newtheorem{lemma}{Lemma}
\newtheorem{remark}{\bf Remark}
\def\E{\mathsf{E}}
\def\phi{\varphi}
\def\l{\left}
\def\r{\right}
\def\({\left(}
\def\){\right)}
\def\b0{{\mathbf{0}}}
\newcommand{\nn}{\nonumber}
\begin{document}

\addtolength{\textfloatsep}{-2pt}
\textheight=24cm

%%%%%%%%%%%%%
% TITLE
\title{Wirelessly Powered Backscatter Communication Networks: Modeling, Coverage and Capacity}

\author{\IEEEauthorblockN{Kaifeng Han and Kaibin Huang}
\IEEEauthorblockA{Department of Electrical and Electronic Engineering\\
The University of Hong Kong, Hong Kong\\
Email: kfhan@eee.hku.hk, huangkb@eee.hku.hk}}\vspace{-10pt}

\maketitle

\begin{abstract} Future Internet-of-Things (IoT) will connect billions of small computing devices embedded in the environment and support their device-to-device (D2D) communication. Powering this massive number of embedded  devices is a key challenge of designing IoT since batteries increase the devices' form factors and their recharging/replacement is difficult. To tackle this challenge, we propose a novel network architecture that integrates wireless power transfer and backscatter communication, called \emph{wirelessly powered backscatter communication} (WP-BC)  networks. In this architecture, \emph{power beacons} (PBs)  are deployed for wirelessly powering devices; their    ad-hoc communication  relies on backscattering and modulating incident continuous waves  from PBs, which consumes orders-of-magnitude less power than traditional radios. Thereby, the dense deployment of low-complexity PBs with high transmission power can  power a large-scale IoT. In this paper, a WP-BC network  is modeled as a random Poisson cluster process in the horizontal plane where PBs are Poisson distributed and active ad-hoc pairs of backscatter communication  nodes with fixed separation distances form random clusters centered at PBs. Furthermore, by harvesting energy from and backscattering radio frequency (RF) waves transmitted by PBs, the transmission power of each node  depends on the distance from the associated PB. Applying stochastic geometry, the network coverage probability and  transmission capacity are derived and optimized as functions of the  backscatter duty cycle and reflection coefficient  as well as the PB density. The effects of the parameters on  network performance are characterized.

\end{abstract}

\section{Introduction}

The vision of \emph{Internet-of-Things} (IoT) is to connect billions of small computing devices embedded in the environment (e.g., walls and furniture) and implanted in bodies and enable their \emph{device-to-device} (D2D) wireless communication. Powering  a massive number of such devices is a key design challenge for IoT.  Batteries add to their weights and form factors and battery recharging/replacement increases  the maintenance cost if not infeasible.  To tackle the challenge, we propose a novel network architecture that enables large-scale passive IoT deployment by seamless integration of wireless power transfer (WPT) \cite{Huang:CuttingLastWiress:2014, bi:PowerCommunication:2014} and low-power backscattering communication,  called a \emph{wirelessly powered backscatter communication} (WP-BC) network. Specifically, \emph{power beacons} (PBs) that are  stations  dedicated for WPT \cite{HuangLauArXiv:EnablingWPTinCellularNetworks:2013} are deployed for wirelessly  powering dense  backscatter D2D  links  and each node transmits data by reflecting and modulating  the carrier signal sent by PBs.
In this paper, a large-scale WP-BC network  is modeled as Poisson cluster processes and its coverage and capacity are analyzed using stochastic geometry.

% KF: In journal draft, more references can be added on backscatter communications.

Backscatter communication refers to a design where a radio device transmits data via reflecting and modulating an incident radio frequency (RF) signal by adapting the level of antenna impedance mismatch to vary the reflection coefficient and furthermore harvests energy from the signal \cite{42stockman1984communication, 1boyer2014backscatterComm.}. As their requires no energy hungry components such as oscillators and analog-to-digital converters (ADCs), a backscatter transmitter consumes power orders-of-magnitude less than a conventional radio. Traditionally, backscatter communication is widely used  in the application of radio frequency Identification (RFID) where a reader powers and communicates with a RFID tag over a short range typically of several meters \cite{1boyer2014backscatterComm., G.Yang:backscattercommun.:2015, Wang:2012:EfficientReliable}. This design is unsuitable for IoT since typical nodes are energy constrained and may not be able to wirelessly power other nodes for communications over sufficiently  long distances. This motivated the design of backscatter communication powered by RF energy harvesting where the transmission of a backscatter node relies on reflecting incident RF signals from the ambient environment such as TV, Wi-Fi and cellular  signals \cite{45liu2013ambient, Kellogg:2014:WIFIBackscatter, Liu2014:Enabling}.   Nevertheless, backscatter communication networks based on ambient RF energy harvesting do not have scalability due to their dependance on other  networks as energy sources. Thus they  may not be suitable for implementing large-scale dense IoT. This motivates the design of WP-BC network  architecture where WPT can deliver power much higher than that by energy harvesting and low-complexity backhaul-less PBs allow wide-spread deployment to power dense passive D2D links.

The work is based on the popular approach of designing and analyzing wireless networks using stochastic geometry (a survey can be found in e.g., \cite{HaenggiAndrews:StochasticGeometryRandomGraphWirelessNetworks}). Among various types of spatial point processes, \emph{Poisson cluster process} (PCP), where \emph{daughter} points form random clusters centered at points from a \emph{parent} \emph{Poisson point process} (PPP),  are commonly used for modeling wireless networks with random cluster topologies arising from geographical factors or protocols for medium access control \cite{GantiHaenggi:OutageClusteredMANET:2009, gulati2010statistics}. In particular, in recent work on heterogeneous networks, PCPs have been frequently used to model the phenomenons  of user clustering at hotspots \cite{chun2015modeling} and the clustering of small-cell base stations (BSs) around macro-cell BSs \cite{suryaprakash2015modeling}. In this work, the  WP-BC network  is also modeled as a PCP where PBs form the parent PPP and backscatter nodes are the clustered daughter points. This topology is motivated by the fact that only nodes sufficiently near PBs can harvest sufficient energy for operating circuits and powering transmission.  Relying on WPT from PBs, nodes' transmission powers depend on their distances from the nearest PBs. In contrast, in the conventional network models,  transmission powers of BSs/nodes are independent of their locations. The location-dependent transmission powers in the WP-BC network   as well as other practical factors (e.g., circuit power and backscatter duty cycle) introduce new challenges for network performance analysis.

Recently, stochastic geometry has been also applied to model  large-scale WPT networks building on existing network architectures including cellular networks \cite{HuangLauArXiv:EnablingWPTinCellularNetworks:2013, che2015spatial} and relay networks \cite{krikidis2014simultaneous, mekikis2014wireless}.  In particular, the WP-BC network  is  similar to cellular networks with WPT considered in \cite{HuangLauArXiv:EnablingWPTinCellularNetworks:2013, che2015spatial} in that  PBs are deployed to power passive nodes'  transmissions. Nevertheless, the current work faces new theoretical challenges arising from a new network topology based on a PCP instead of PPPs in the prior work. Furthermore, practical factors arising from backscatter (e.g., backscatter duty cycle and reflection coefficient)  also introduce a new dimension for network performance optimization.

To the best of our knowledge, the current work represents the first attempt to model and analyze a large-scale backscatter communication network using stochastic geometry.  The theoretic  contributions of this paper are summarized as follows. Based on the mentioned model,  the performance of the WP-BC network are quantified in terms of 1) \emph{success probability} for communication over a typical  backscatter D2D link and 2) \emph{transmission capacity} measuring the spatial density of reliable active  links. The analysis of the metrics are based on deriving the interference characteristic functionals and signal power distribution in the WP-BC network, which account for circuit power, backscatter duty cycle $D$, and reflection coefficient $\beta$ of backscatter nodes. Both success probability and network capacity are found to be concave functions of $D$ and $\beta$, shedding light on the WP-BC network design by convex optimization.

\section{Mathematical Models and Metrics}

\subsection{Network  Model}
The random WP-BC network  is modeled using a PCP as follows. Let $\Pi = \{Y_0, Y_1, \cdots\}$ denote a PPP in the horizontal plane with density $\lambda_p$ modeling the locations of PBs. Consider a cluster of mobile transmitting nodes centered at the origin, denoted as $\widetilde{\mathcal{N}}= \{X_0, X_1, \cdots, X_N\}$. The number of nodes, $N$, is a Poisson random variable (r.v.) with mean $\bar{c}$. The r.v.,   $X_n\in \mathds{R}^2$,  represents the location of the corresponding  node  and $\{X_n\}$ are independent and identically distributed (i.i.d.). For an arbitrary r.v. $X_n$, the direction is isotropic and the distance to the origin, $|X_n|$ \footnote{Given $X\in\mathds{R}^2$, $|X|$ denotes the Euclidean distance from $X$ to the origin.}, has one of two possible probability density functions (PDFs), resulting in the Matern and Thomas cluster process,  as given below:
\begin{align}
\text{(Matern c.p.)}   \quad f(x) &=
   \begin{cases}
       \frac{1}{\pi a^{2}}, &0 \leq |x| \leq a, \\
     0, &\mbox{otherwise,}
   \end{cases}\label{Eq:PDF:Matern}\\
\text{(Thomas c.p.)}\quad     f(x)&=\frac{1}{2\pi \sigma^{2}}\exp\left ( -\frac{|x|^{2}}{2\sigma^{2}} \right), \label{Eq:PDF:Thomas}
\end{align}
where $a$ and $\sigma^2$ are positive constants representing the cluster radius and the variance, respectively.   Let $\{\widetilde{\mathcal{N}}_Y\}$ denote a sequence of clusters constructed by generating an  i.i.d. sequence of clusters having the same distribution as $\widetilde{\mathcal{N}}$ and translating them to be centered at the points $\{Y\}\in \Pi$.
Then the process of transmitting nodes, denoted as $\tilde{\Phi}$, can be written as
\begin{equation}
\tilde{\Phi} = \bigcup_{Y\in \Pi} (\widetilde{\mathcal{N}}_Y + Y).
\end{equation}
The density of $\tilde{\Phi}$ is $\lambda_p \bar{c}$.  Fig.\ref{generalPB} shows  two network realizations  generated based on the Matern and Thomas cluster process. Each transmitting node is paired with an intended receiving node that is located at a unit distance and in an  isotropic direction. This generates a random  spatial process  modeling  distributed D2D links.

%\begin{figure}[t]
%\centering
%\subfigure[Matern Cluster Process]{\includegraphics[width=8.5cm]{./Fig/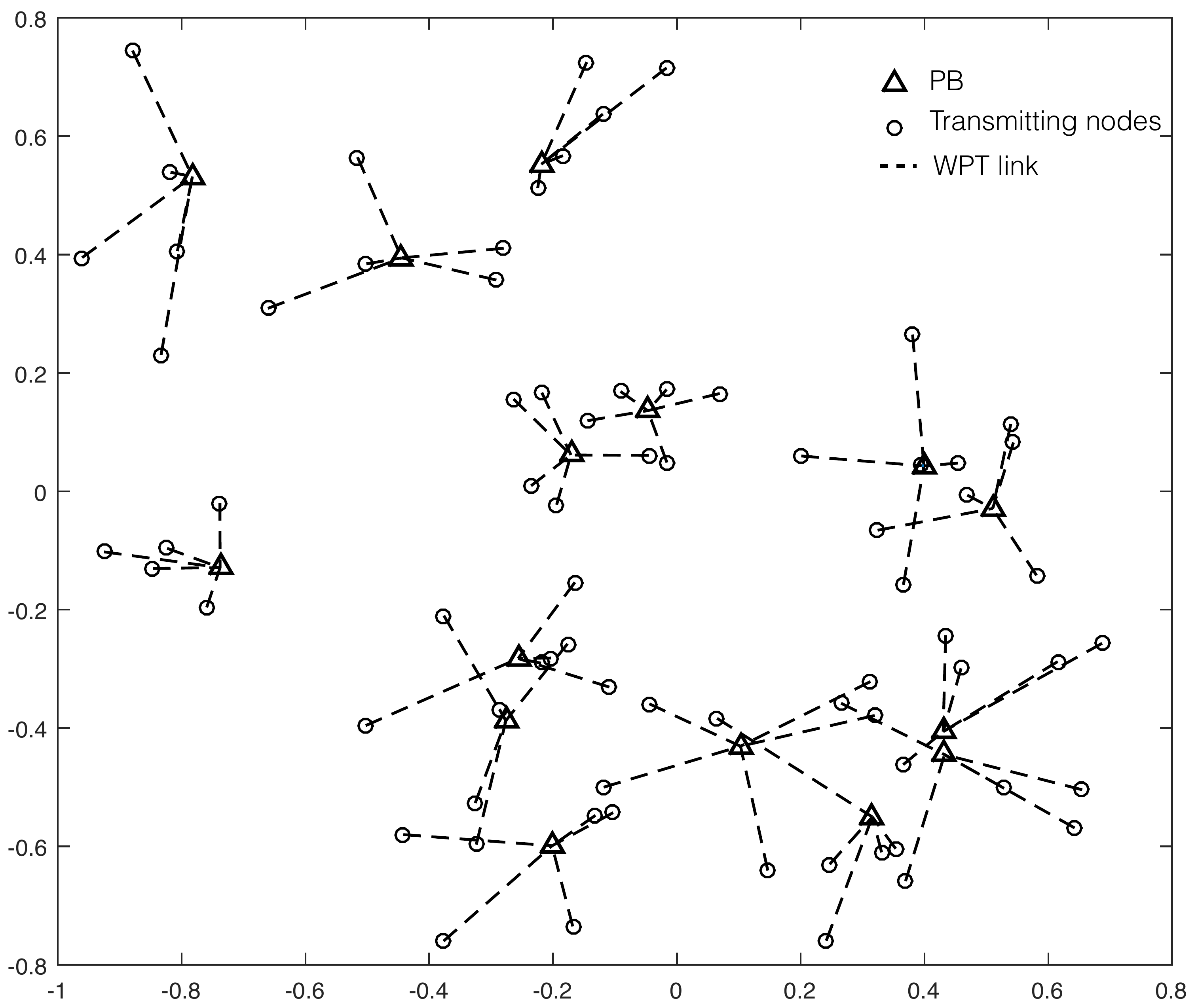}}
%\subfigure[Thomas Cluster Process]{\includegraphics[width=8cm]{./Fig/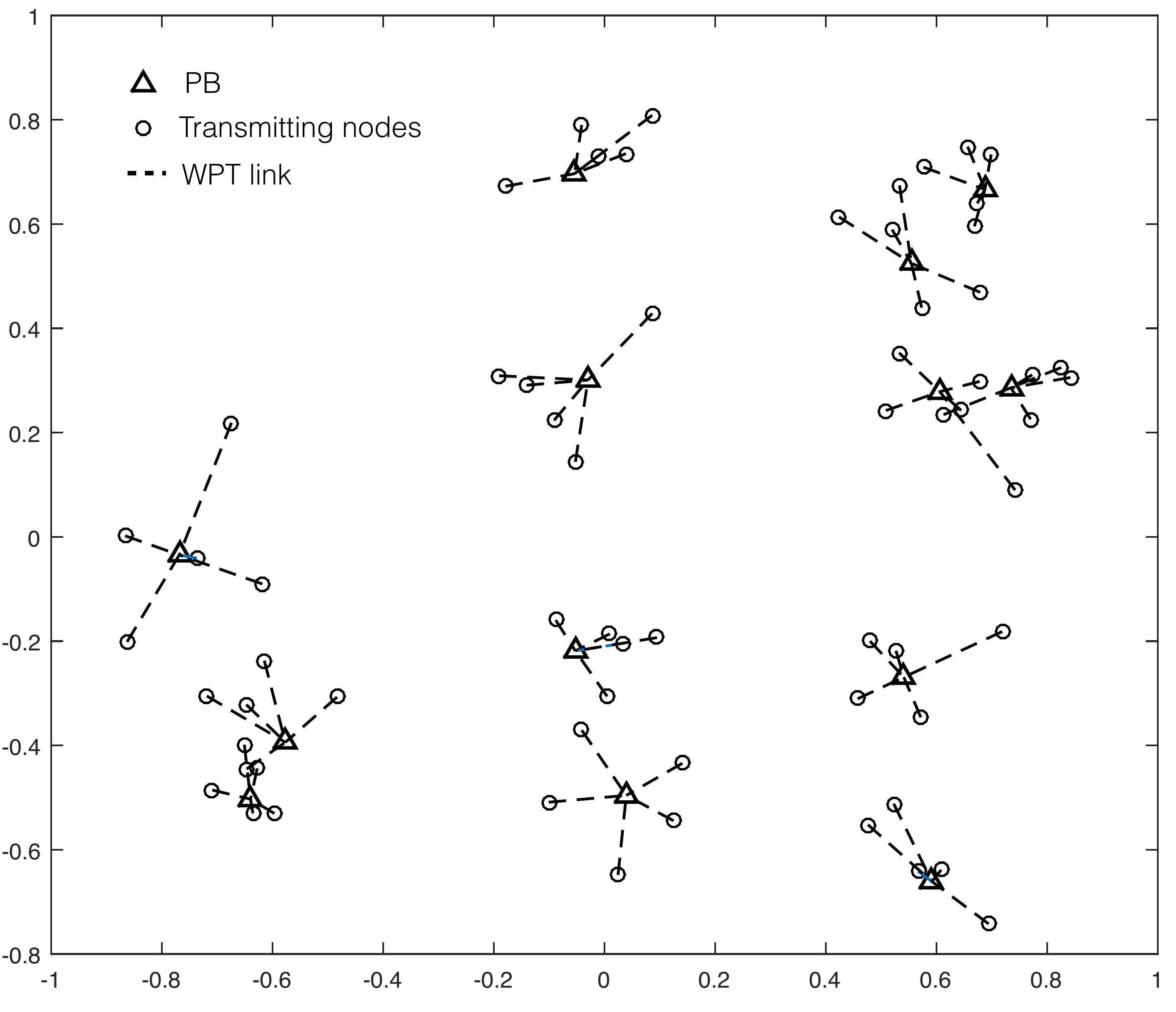}}
%\caption{The Matern cluster process under general PB model (left) and the Thomas cluster process under general PB model (right).}\label{generalPB}
%\end{figure}

\begin{figure}[t]
\centering
\subfigure[Matern cluster process]{\includegraphics[width=8.5cm]{}}
\subfigure[Thomas cluster process]{\includegraphics[width=8.5cm]{}}
\caption{The spatial distribution of the WP-BC network modeled using the (a) Matern cluster process and (b) Thomas cluster process.}\label{generalPB}
\end{figure}

Time is divided into slots of unit duration.  Each slot is further divided into $M$ mini-slots. In each slot, independent of others, a transmitting node randomly selects a single mini-slot to transmit signal by   backscattering.  This divides each slot into a backscatter phase and a waiting    phase of durations $1/M$ and $(1 - 1/M)$, respectively (see more details in the following sub-section). The duty cycle,  denoted as $D$, is given as $D = 1/M$. A transmitting node in a backscatter phase is called a \emph{backscatter node}. Then the backscatter-node process, denoted as $\Phi$, and a cluster of  backscatter nodes centered at $Y$, denoted as $\mathcal{N}_Y$, can be obtained from $\tilde{\Phi}$ and $\widetilde{\mathcal{N}}_Y$ by independent thinning. As a result, $\Phi$ has the density of $\lambda_p \bar{c}D$ and the expected number of nodes in $\mathcal{N}_Y$ is $\bar{c} D$.

The channels are modeled as follows. PBs are equipped with antenna arrays and nodes have single isotropic antennas. Each PB beams a continuous wave (CW) to nodes in the corresponding cluster.  Given beamforming and relatively short distances for efficient WPT, each WPT link can be suitably modeled as a channel with path loss but no fading \cite{Huang:CuttingLastWiress:2014}. The PB allocates transmission power of $\eta$ for each node. As a result, with  a typical PB at $Y_0$, the receive power at a typical  node $X_0$ is given as $P_{X_0} = \eta g |X_0 - Y_0|^{-\alpha_1}$ where $g>0$ denotes the beamforming gain and  $\alpha_1$ the path-loss exponent for WPT links. Due to beamforming, it is assumed that each node harvests negligible energy from other PBs and data signals compared with that from the serving PB.  When transmitting, a  node  backscatters a fraction, called a \emph{reflection coefficient} and denoted as $\beta \in [0, 1]$, of $P_X$ such that the signal power received  at the typical receiver at $Z_0$ is $\beta P_{X_0} h_{X_0} $ where $h_{X_0}\sim\exp(1)$ models Rayleigh fading.  A backscatter node may not be able to transmit if there is insufficient energy for operating its circuit as discussed in the sequel.  Let $Q_X$ denote the random on/off transmission power of the backscatter node $X$. The interference power measured at $Z_0$ can be written as
\begin{equation}
I = \sum_{X \in \Phi\backslash \{X_0\}} Q_X h_X |X - Z_0|^{-\alpha_2},\label{Eq:IntPwr}
\end{equation}
where $\{h_X\}$ are i.i.d. $\exp(1)$ r.v.s modeling Rayleigh fading  and $\alpha_2$ represents the path-loss exponent for interference (D2D) links.

\begin{figure}[t]
\centering
\includegraphics[width=9cm]{./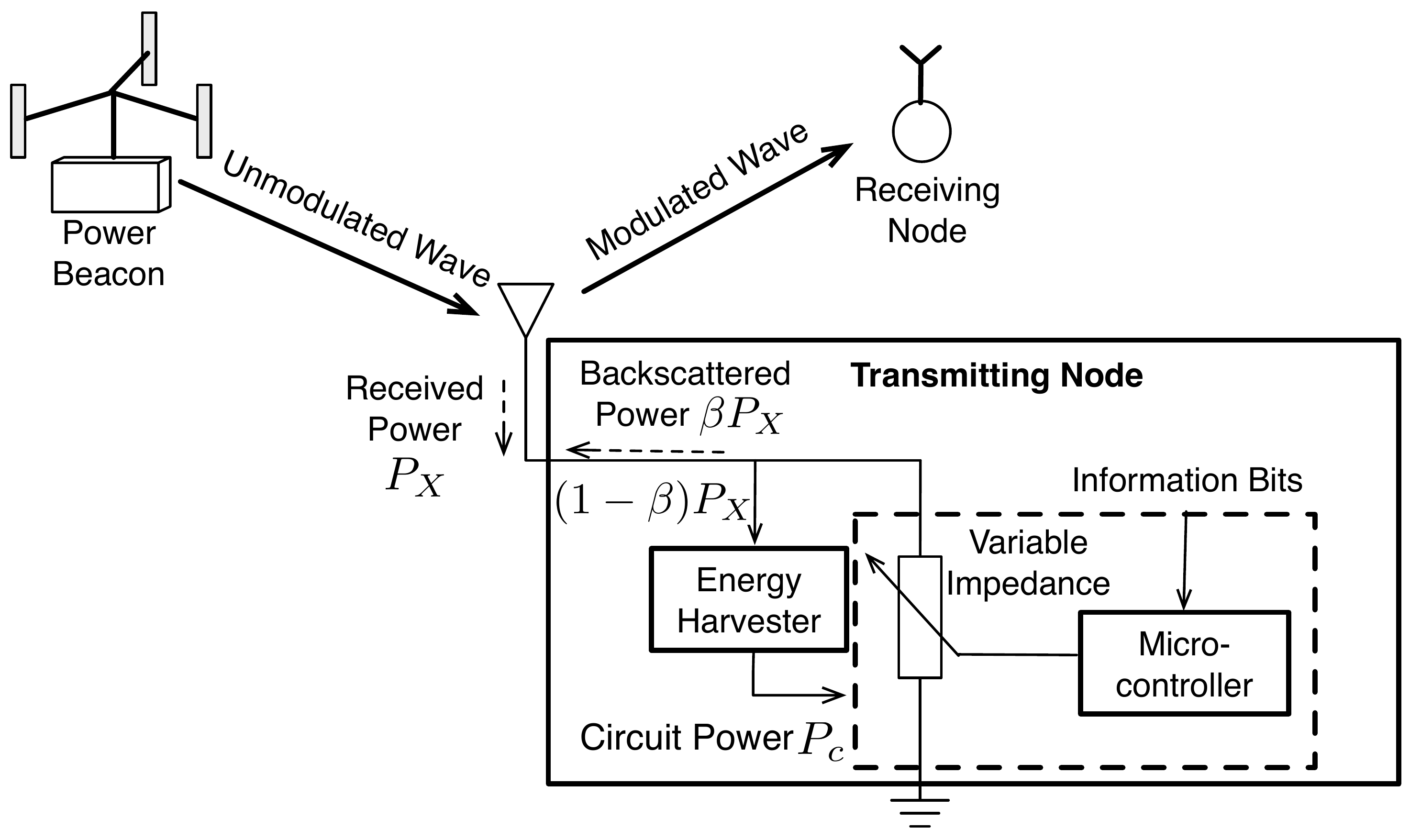}
\caption{Wirelessly powered backscatter communication.}\label{sysmodel}
\end{figure}

\subsection{Backscatter Communication Model}

The operation of WP-BC network is illustrated in Fig.~\ref{sysmodel}. Consider the backscatter phase of an arbitrary slot. A transmitting node adapts the variable impedance (or equivalently its level of mismatch with the antenna impedance)  shown in the figure so as to modulate the backscattered CW with information bits \cite{1boyer2014backscatterComm.}. Given a backscatter node at $X$ and the reflection coefficient $\beta$, the  backscattered power is $\beta P_X$ with the remainder $(1-\beta) P_X$ consumed by the circuit or harvested \cite{karthaus2003:fully}. Next,  for the waiting  phase,  the transmitting node withholds transmission and  performs only energy harvesting. It is assumed that the circuit of each transmitting node consumes fixed power denoted as $P_c$. To be able to transmit, a backscatter  node has to harvest sufficient energy for powering the circuit, resulting in the following \emph{circuit-power constraint}: $(1-\beta) P_X D + P_X (1 - D) \geq P_c$. This gives
\begin{equation}\label{Eq:Circuit}
\text{(Circuit-power constraint)}\quad P_X \geq \frac{P_c}{1- \beta D}.
\end{equation}
Consequently, a backscatter node transmits or is silent depending on if the constraint is satisfied.

\subsection{Performance Metrics}

The network performance is measured by two metrics. One is the probability of the event that the transmission over a typical D2D link is successful, called the \emph{success probability} and denoted as $P_s$. Assuming an interference limited network, the condition for successful transmission is that the receive signal-to-interference ratio (SIR) exceeds a fixed positive threshold $\theta$. Under the circuit power constraint in \eqref{Eq:Circuit}, a transmission power of the typical transmitting node can be written as  $Q_{X_0} = \beta \ell(P_{X_0}) $ where the function $\ell(P)$ gives $P$ if $P \geq P_c/(1- \beta D)$ or else is equal to $0$. Similarly, the interference power in \eqref{Eq:IntPwr} can be rewritten as
\begin{equation}
I  = \sum_{X \in \Phi\backslash\{X_0\}} \beta \ell(P_X) h_X |X - Z_0|^{-\alpha_2}.\label{Eq:IntPwr:a}
\end{equation}
Then the success probability is given as
\begin{equation}\label{Eq:Ps}
P_s = \Pr\l(\beta \ell(P_{X_0})h_{X_0}   \geq \theta I\r).
\end{equation}
The other metric is \emph{transmission capacity} \cite{HaenggiAndrews:StochasticGeometryRandomGraphWirelessNetworks} denoted as $C$ and defined as:
\begin{equation}\label{Eq:TXCap}
C = \lambda_p \bar{c} D P_s.
\end{equation}
The metric measures the density of reliable and active backscatter D2D links.

\section{Interference and Signal  Distributions}
In this section, for the WP-BC network, the distributions of interference at the typical receiver and the signal transmission power for the typical backscatter node are analyzed.  The results are used subsequently for characterizing network coverage and capacity.

\subsection{Interference Characteristic Functionals}
Let $\mathcal{C}(s)$ with $s > 0$ denote the characteristic functional of the interference power $I$ given in \eqref{Eq:IntPwr:a}:  $\mathcal{C}(s) = \E\l[e^{-sI}\r]$.  In this section, the characteristic functional is derived. Without loss of generality, consider a typical backscatter  node at $X_0$ at the origin and the typical receiving node $Z_0 = z$. To facilitate derivation, $I$ is decomposed into the power of \emph{intra-cluster} and \emph{inter-cluster} interference, denoted as $I_a$ and $I_b$, respectively. Mathematically, $I = I_a + I_b$  where
\begin{align}
I_a &= \sum_{X\in \mathcal{N}_{0} \backslash \{ X_0\}} \beta \ell(P_X)  h_X |X - z |^{-\alpha_2}, \label{Eq:IntraInt}\\
I_b &= \sum_{Y\in \Pi\backslash \{Y_0\}  }\sum_{X\in \mathcal{N}_{Y} } \beta \ell(P_X)  h_X |X-z|^{-\alpha_2}. \label{Eq:InterInt}
\end{align}
Note that in $\eqref{Eq:InterInt}$, the first summation is over all other PBs not affiliated with the typical backscatter node (corresponding to clusters of interferers) and the second summation is over the cluster of interferers  centered at the PB $Y$.

The characteristic functionals of $I_a$ and $I_b$ are denoted as $\mathcal{C}_a(s)$ and $\mathcal{C}_b(s)$, respectively, which are defined similarly as $\mathcal{C}(s)$.  They are derived as shown in the following two lemmas.

\begin{lemma}[Intra-cluster interference]\label{Lem:IntraInt}\emph{Given $s \geq 0$, the characteristic functional of the intra-cluster interference power $I_a$ is given as
\begin{equation}
\mathcal{C}_a(s) = \int_{\mathds{R}^2}\exp\Big(-\bar{c} D q(s, y, z) \Big)f(y) d y , \nn
\end{equation}
where
\begin{align}
q(s, y, z) = \int_{\mathds{O}}\frac{1}{1 + s^{-1}\beta^{-1}|x|^{\alpha_{1}}|x - y -z|^{\alpha_{2}}}f(x)d x \nn
\end{align}
and the set $\mathds{O}$ arising from the \emph{circuit power constraint} is defined as $\mathds{O} = \l\{x \in \mathds{R}^2\mid |x| \leq \l(\frac{\eta g(1-\beta D)}{P_c}\r)^{\frac{1}{\alpha_1}}\r\}$.
}
\end{lemma}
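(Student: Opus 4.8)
The plan is to compute $\mathcal{C}_a(s)=\E[e^{-sI_a}]$ in two stages: first describe the cluster carrying the typical node under the reduced Palm distribution of the cluster process, and then evaluate the resulting sum over a Poisson process via its probability generating functional (PGFL). Condition on the typical backscatter node sitting at the origin, $X_0=0$. Its cluster is centred at the serving PB $Y_0$, and the displacement $y:=X_0-Y_0=-Y_0$ inherits the daughter law $f$, so that $Y_0=-y$ with $y\sim f$. The decisive structural fact is that, within a cluster, the (thinned) daughters form a Poisson process of intensity $\bar{c}D\,f(\cdot)$ about the centre; hence by the Slivnyak--Mecke theorem its \emph{reduced} Palm distribution coincides with its original law, so conditioning on the tagged point $X_0$ leaves the remaining intra-cluster nodes $\mathcal{N}_0\setminus\{X_0\}$ distributed as a Poisson process of intensity $\bar{c}D\,f(\cdot)$ about $Y_0$. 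In particular the \emph{mean number of remaining interferers is still} $\bar{c}D$, not $\bar{c}D-1$; pinning this down is the main thing to get right, and everything afterwards is a routine PGFL computation.

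With this in hand, condition on $Y_0=-y$ and write an interferer's displacement as $x:=X-Y_0\sim f$, so that $X=-y+x$, giving $|X-z|=|x-y-z|$, while its receive power $P_X=\eta g|x|^{-\alpha_1}$ depends only on $|x|$. Applying the PGFL with i.i.d.\ $\exp(1)$ fading marks $h_X$ yields, for $s\geq 0$,
\[
\E\big[e^{-sI_a}\,\big|\,y\big]=\exp\!\Big(-\bar{c}D\!\int_{\mathds{R}^2}\!\big(1-\E_h[e^{-s\beta\ell(P_X)h|x-y-z|^{-\alpha_2}}]\big)f(x)\,dx\Big).
\]
Since $h\sim\exp(1)$, one has $\E_h[e^{-ah}]=(1+a)^{-1}$, so the bracketed term equals $a/(1+a)$ with $a=s\beta\ell(P_X)|x-y-z|^{-\alpha_2}$.

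It remains to enforce the circuit-power constraint and simplify. By \eqref{Eq:Circuit}, $\ell(P_X)=P_X=\eta g|x|^{-\alpha_1}$ precisely when $P_X\geq P_c/(1-\beta D)$, i.e.\ when $|x|\leq\big(\eta g(1-\beta D)/P_c\big)^{1/\alpha_1}$, which is exactly $x\in\mathds{O}$; off $\mathds{O}$ we have $\ell(P_X)=0$, whence $a=0$ and the integrand vanishes, collapsing the domain of integration to $\mathds{O}$. On $\mathds{O}$,
\[
\frac{a}{1+a}=\frac{1}{1+(s\beta\eta g)^{-1}|x|^{\alpha_1}|x-y-z|^{\alpha_2}},
\]
which is the integrand defining $q(s,y,z)$ (the receive-power constant $\eta g$ being absorbed in the normalization). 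Thus the inner integral equals $q(s,y,z)$ and $\E[e^{-sI_a}\mid y]=\exp(-\bar{c}D\,q(s,y,z))$. Averaging over the displacement $y\sim f$ finally gives
\[
\mathcal{C}_a(s)=\int_{\mathds{R}^2}\exp\big(-\bar{c}D\,q(s,y,z)\big)f(y)\,dy,
\]
as claimed. The only nonroutine input is the Palm-calculus reduction in the first paragraph; the PGFL evaluation, the exponential-fading average, and the restriction to $\mathds{O}$ are all direct.
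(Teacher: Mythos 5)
Your proposal is correct and follows essentially the same route as the paper's proof: the Slivnyak/reduced-Palm step to replace $\mathcal{N}_0\setminus\{X_0\}$ by a Poisson cluster of mean $\bar{c}D$, the PGFL (Campbell) evaluation, the $\exp(1)$ fading average giving the $1/(1+a)$ form, the restriction to $\mathds{O}$ via the circuit-power constraint, and the final averaging over the serving-PB displacement with density $f$. Your explicit remark that the $\eta g$ factor is absorbed in the normalization of $q$ is a fair reading of the paper, whose own proof silently drops that constant.
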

\begin{proof} Let $\hat{\E}$ denote the expectation conditioned on the typical backscatter/receiving nodes.

\begin{align}
\hat{\E}\left[e^{-s I_a}\right]  &\overset{(a)}{=} \hat{\E}\Bigg[\! \exp\Bigg(\!\!\!-s\!\!  \sum\limits_{X\in \mathcal{N}_{0} }\!\! \beta h_X \ell(|X\!-\!Y_0|^{-\alpha_{1}}\!) |X\!-\!z|^{-\alpha_{2}}\!\!\Bigg)\! \Bigg] \nonumber \\
& \overset{(b)}{=} \hat{\E}_{Y_0}\Bigg[ \exp\Bigg(-\bar{c} D \int\limits_{\mathds{R}^2}\Big( 1 - \E_h\Big(\exp\Big(- s \beta h \times \nn\\
& \qquad  \quad \ell(|x-Y_0|^{-\alpha_{1}})|x-z|^{-\alpha_{2}}\Big)\Big) f(x - Y_0)d x\Bigg)\Bigg]\nn\\
& = \hat{\E}_{Y_0}\Bigg[ \exp\Bigg(\!\!-\bar{c} D \int\limits_{\mathds{R}^2}\frac{1}{1 + \frac{1}{s \beta \ell(|x-Y_0|^{-\alpha_{1}})|x-z|^{-\alpha_{2}}}}\times\nn\\
& \qquad \qquad   f(x - Y_0)d x\Bigg)\Bigg]\nn\\
&=\hat{\E}_{Y_0}\Bigg[ \exp\Bigg(\!\!\!-\bar{c} D \int\limits_{\mathds{R}^2}\frac{1}{1 + \frac{1}{s \beta \ell(|x|^{-\alpha_{1}})|x - Y_0 -z|^{-\alpha_{2}}}}\times \nn\\
& \qquad \qquad f(x)d x\Bigg)\Bigg], \nn
\end{align}
where $(a)$ and $(b)$ apply   Slivnyak's Theorem an Campbell's Theorem, respectively.  The desired result is obtained using the conditional distribution of $Y_0$ and the definition of $\ell(\cdot)$ and $\E\left[e^{-s I}\right] = \hat{\E}\left[e^{-s I}\right]$ based on Slivnyak's Theorem. \end{proof}

\begin{lemma}[Inter-cluster interference]\label{Lem:InterInt}\emph{Given $s \geq 0$, the characteristic functional of the inter-cluster interference power $I_b$ is given as
\begin{equation}
\mathcal{C}_b(s) =\exp\Bigg(-\lambda_p \int_{\mathds{R}^2} \Big(1 - e^{-\bar{c} D q(s, y, z)}  \Big)dy\Bigg),
\end{equation}
where $q(s, y, z)$  and the set $\mathds{O}$ are defined in Lemma~\ref{Lem:IntraInt}.
}
\end{lemma}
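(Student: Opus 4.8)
The plan is to compute $\mathcal{C}_b(s) = \E[e^{-sI_b}]$ by exploiting the two-level independence structure of the PCP: the outer randomness coming from the parent PPP of PBs, and the inner, mutually independent cluster processes attached to each PB. First I would invoke Slivnyak's Theorem so that, after conditioning on the typical backscatter/receiving pair, the remaining PBs $\Pi\backslash\{Y_0\}$ still form a PPP of density $\lambda_p$, independent of the typical-node conditioning; this removes the excluded parent $Y_0$ cleanly. I would then regard the entire cluster $\mathcal{N}_Y$ together with its i.i.d.\ Rayleigh fading variables as an independent mark carried by each parent point $Y$. This recasts $I_b$ as a functional of an independently marked PPP and isolates the per-cluster contribution $I_Y = \sum_{X\in\mathcal{N}_Y}\beta\ell(P_X)h_X|X-z|^{-\alpha_2}$.

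The second step is to evaluate the conditional Laplace functional of a single cluster, $v(y):=\E[e^{-sI_Y}\mid Y=y]$. Since $\mathcal{N}_Y$ consists of a Poisson$(\bar{c}D)$ number of nodes located i.i.d.\ with density $f(\cdot-y)$, Campbell's Theorem gives
\begin{equation}
v(y)=\exp\Bigg(-\bar{c}D\int_{\mathds{R}^2}\Big(1-\E_h\big[e^{-s\beta\ell(|x-y|^{-\alpha_1})h|x-z|^{-\alpha_2}}\big]\Big)f(x-y)\,dx\Bigg).\nn
\end{equation}
After shifting the integration variable to recenter the cluster at the origin and integrating out $h\sim\exp(1)$ --- exactly the manipulations in the proof of Lemma~\ref{Lem:IntraInt}, and using the isotropy of $f$ --- the exponent collapses to precisely $-\bar{c}D\,q(s,y,z)$, with the domain $\mathds{O}$ again emerging from the definition of $\ell(\cdot)$. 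Hence $v(y)=e^{-\bar{c}Dq(s,y,z)}$, reusing verbatim the intra-cluster computation.

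The final step is to take the product over all interfering parents and apply the probability generating functional of the PPP: for a PPP of density $\lambda_p$ and a mark-dependent factor $v(\cdot)\in[0,1]$,
\begin{equation}
\E\Bigg[\prod_{Y\in\Pi\backslash\{Y_0\}}v(Y)\Bigg]=\exp\Bigg(-\lambda_p\int_{\mathds{R}^2}\big(1-v(y)\big)\,dy\Bigg).\nn
\end{equation}
Substituting $v(y)=e^{-\bar{c}Dq(s,y,z)}$ yields the claimed expression for $\mathcal{C}_b(s)$. I would highlight the structural contrast with Lemma~\ref{Lem:IntraInt}: there the parent $Y_0$ is weighted by $f$ because Palm conditioning forces it to carry a daughter at the origin, whereas here the interfering parents carry no such constraint and enter through the plain PGFL, so no $f$-weighting appears.

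I expect the main obstacle to be the careful justification of the marking/independence argument that lets the PGFL factor the contributions across clusters. One must verify that conditioning on the typical pair leaves the parent process and all other clusters jointly independent, and that the cluster-plus-fading marks are i.i.d.\ across parents; only then does the interference from distinct clusters factorize conditionally on $\Pi$, turning the expectation into a product to which the PGFL applies. By comparison, the inner integration over $h$ and the change of variables are routine and identical to the corresponding steps in Lemma~\ref{Lem:IntraInt}.
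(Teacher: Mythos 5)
Your proposal is correct and follows essentially the same route as the paper: apply Slivnyak's Theorem to handle the conditioning, evaluate the single-cluster Laplace functional exactly as in the proof of Lemma~\ref{Lem:IntraInt} to obtain $e^{-\bar{c}Dq(s,Y,z)}$, and then apply the probability generating functional of the parent PPP (which the paper refers to as Campbell's Theorem) to obtain the final expression. Your added remark contrasting the $f$-weighting of $Y_0$ in Lemma~\ref{Lem:IntraInt} with the unweighted PGFL here is accurate and consistent with the paper's treatment.
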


\begin{proof} Using the definition of $I_b$ in \eqref{Eq:InterInt} and applying Slivnyak's Theorem,
\begin{align}
\E\left[e^{-s I_b}\right] &= \E\Bigg[ \exp\Bigg(-s \sum_{Y\in \Pi }\sum_{X\in \mathcal{N}_{Y} } \beta h_X\ell(|X - Y|^{-\alpha_1})  \times \nn\\
&\qquad \qquad  |X-z|^{-\alpha_2} \Bigg) \Bigg] \nonumber \\
&=\E\Bigg[ \prod_{Y\in \Pi}\E\Bigg[\prod_{X\in \mathcal{N}_{Y} }\exp\Big(-s  \beta  h_X \ell(|X - Y|^{-\alpha_1})\times \nn\\
& \qquad \qquad  |X-z|^{-\alpha_2} \Big) \Bigg]\Bigg] \nonumber.
\end{align}
The inner expectation focusing on a single cluster  can be derived using similar steps as in the proof of Lemma~\ref{Lem:IntraInt}. As a result,
\begin{align}
\E\left[e^{-s I_b}\right] &= \E\Bigg[ \prod_{Y\in \Pi} \exp\Big(-\bar{c} D q(s, Y, z) \Big)\Bigg].
\end{align}
Applying Campbell's Theorem gives the desired result.  \end{proof}

\subsection{Signal Distribution}

Under the circuit-power  constraint, there exists a threshold on the separation distance between a pair of PB and affiliated backscatter  node:
\begin{equation}
d_0 = \l[\frac{\eta g (1 - \beta D)}{P_c}\r]^{\frac{1}{\alpha_1}}\label{Eq:Th:Dist}
\end{equation}
such that  the node's transmission power is zero if the distance exceeds the threshold. Then transmission power of the typical backscatter node, denoted as $P_t$, is given as $P_t =  \beta \eta g |X_0 - Y_0|^{-\alpha_1}$ if $|X_0 - Y_0| \leq d_0$ or otherwise $P_t = 0$. The event of $P_t = 0$ corresponds that of (circuit) power outage.
It follows that the \emph{power-outage probability}, denoted as $p_0$, can be written as
\begin{equation}
p_0 = \Pr(P_t = 0) = 2\pi \int_{d_0}^\infty f(r) r dr.   \label{Eq:SigDist:1}
\end{equation}
For the case where the circuit-power constraint is satisfied,
\begin{equation}
\Pr(P_t \geq  \tau) = 2\pi \int\limits_0^{(\beta \eta g/\tau)^{\frac{1}{\alpha_1}}} f(r) rdr, \ \  \tau \geq  \frac{\beta P_c}{1 - \beta D}. \label{Eq:SigDist:2}
\end{equation}
Substituting the CDFs in \eqref{Eq:PDF:Matern} and \eqref{Eq:PDF:Thomas} into \eqref{Eq:SigDist:1} and \eqref{Eq:SigDist:2} gives the following result.

\begin{lemma}[Node transmission power] \label{Lem:SigDist}\emph{The transmission power of a typical backscatter node has support of $\{0\}\cup [\frac{\beta P_c}{1-\beta D}, \infty]$.  The power-outage probability, $p_0$, and the CCDF, denoted as $\bar{F}_t$,  are given as follows.
\begin{itemize}
\item[--] ({\bf Matern cluster process})
\begin{align}
p_0  &= \l\{
\begin{aligned}
& 1 -\l(\frac{d_0}{a}\r)^2, && d_0 <  a, \\
& 0, && \text{otherwise.}
\end{aligned}\r.\nn
\end{align}
\begin{align}
\bar{F}_t(\tau) = \Pr(P_t \geq \tau ) &= \l\{
\begin{aligned}
& \frac{1}{a^2}\l(\frac{\beta \eta g}{\tau}\r)^{\frac{2}{\alpha_1}}, && \tau >   \frac{\beta \eta g }{a ^{\alpha_1}}\\
& 1, && \text{otherwise}
\end{aligned}\r.\nn
\end{align}
with  $\tau \in [\frac{\beta P_c}{1-\beta D}, \infty]$.
\item[--] ({\bf Thomas cluster process})
\begin{align}
p_0 &= \exp\l(-\frac{d_0^2}{2\sigma^2}\r), \nn\\
\bar{F}_t(\tau) &= 1 - \exp\l( - \frac{1}{2\sigma^2}\l(\frac{\beta\eta g}{\tau}\r)^{\frac{2}{\alpha_1}}\r) \nn
\end{align}
with  $\tau \in [\frac{\beta P_c}{1-\beta D}, \infty]$.
\end{itemize}
}
\end{lemma}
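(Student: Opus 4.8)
The plan is to treat this as a direct evaluation. The two integral expressions \eqref{Eq:SigDist:1} and \eqref{Eq:SigDist:2} for $p_0$ and $\Pr(P_t \geq \tau)$ have already been established for a general isotropic distance density $f$, so the only remaining work is (i) to pin down the support of $P_t$ and (ii) to substitute the specific radial densities from \eqref{Eq:PDF:Matern} and \eqref{Eq:PDF:Thomas} and carry out the resulting elementary integrals.

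First I would settle the support. By definition $P_t = \beta\eta g |X_0-Y_0|^{-\alpha_1}$ whenever $|X_0-Y_0| \le d_0$ and $P_t = 0$ otherwise, and since $r \mapsto \beta\eta g r^{-\alpha_1}$ is strictly decreasing on $(0,d_0]$, the minimum nonzero value is attained at $r = d_0$. Substituting the expression for $d_0$ from \eqref{Eq:Th:Dist} gives $\beta\eta g\, d_0^{-\alpha_1} = \beta\eta g \cdot \frac{P_c}{\eta g(1-\beta D)} = \frac{\beta P_c}{1-\beta D}$, while letting $r \to 0^+$ sends $P_t \to \infty$. Hence the nonzero part of the support is exactly $[\frac{\beta P_c}{1-\beta D}, \infty)$, and together with the outage atom at $0$ this yields the claimed support $\{0\}\cup[\frac{\beta P_c}{1-\beta D},\infty)$.

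For the Matern process I would plug $f(r) = 1/(\pi a^2)$ on $[0,a]$ into the two integrals. Because the radial support is truncated at $a$, the answer splits into cases: $p_0 = \frac{2}{a^2}\int_{d_0}^a r\,dr = 1 - (d_0/a)^2$ when $d_0 < a$, and $p_0 = 0$ when $d_0 \ge a$ (the integrand vanishes beyond $a$). Likewise, writing $b = (\beta\eta g/\tau)^{1/\alpha_1}$ for the upper limit in \eqref{Eq:SigDist:2}, one gets $\bar F_t(\tau) = b^2/a^2 = \frac{1}{a^2}(\beta\eta g/\tau)^{2/\alpha_1}$ when $b \le a$ (equivalently $\tau > \beta\eta g/a^{\alpha_1}$) and $\bar F_t(\tau) = 1$ when $b > a$, since the upper limit is then clipped to $a$. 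For the Thomas process the Gaussian tail removes the truncation, and I would evaluate both integrals by the substitution $u = r^2/(2\sigma^2)$: this turns $\frac{1}{\sigma^2}\int_{d_0}^\infty r\,e^{-r^2/(2\sigma^2)}\,dr$ into $\int_{d_0^2/(2\sigma^2)}^\infty e^{-u}\,du = e^{-d_0^2/(2\sigma^2)}$, and the finite-limit version into $1 - e^{-b^2/(2\sigma^2)} = 1 - \exp(-\frac{1}{2\sigma^2}(\beta\eta g/\tau)^{2/\alpha_1})$.

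Since every integral here is either a monomial or reduces to $\int e^{-u}\,du$, there is no genuine analytic obstacle. The only point requiring care is the bookkeeping of the Matern case boundaries: one must track when the fixed radial support $[0,a]$ truncates the region of integration (the comparisons of $d_0$ with $a$ and of $b$ with $a$) so that the piecewise forms of $p_0$ and $\bar F_t$ come out correctly. The Thomas case is boundary-free and purely computational.
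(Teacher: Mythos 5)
Your proposal is correct and follows exactly the paper's route: the paper derives the general integral expressions \eqref{Eq:SigDist:1} and \eqref{Eq:SigDist:2} and then simply states that substituting the Matern and Thomas densities yields the lemma, which is precisely the computation you carry out (with more care about the support endpoints and the Matern truncation cases than the paper itself shows). No gaps.
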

A sanity check is as follows. The distance threshold $d_0$ in \eqref{Eq:Th:Dist} is a monotone decreasing  function of $\beta D$ and a monotone increasing function of $P_c$. The reason is that  increasing the duty cycle and reflection coefficient leads to higher energy consumption but increasing the circuit power has the opposite effect.  Consequently, the power-outage probability decreases with increasing $d_0$ for both cases in Lemma~\ref{Lem:SigDist}. Next, the CCDFs in Lemma~\ref{Lem:SigDist} are observed to be independent of $D$ but increase with growing $\beta$. The reason is that conditioned on the node transmitting, the transmission power depends only on the incident power from the PB scaled by  $\beta$ but is independent of the duty cycle.

\section{Network Coverage and Capacity}
In this section, the coverage and capacity of the WP-BC network are characterized using the results derived in the preceding section.

\subsection{Network Coverage}
The network coverage is quantified by deriving the success probability, $P_s$ defined in \eqref{Eq:Ps}, as follows. The event of successful transmission by the typical backscatter node  occurs under two conditions: 1) the circuit-power constraint in \eqref{Eq:Circuit} is satisified and 2) under this condition,  the receive SIR exceeds the threshold $\theta$. Therefore, $P_s$ can be written as
\begin{equation}\label{Eq:Ps:Decomp}
P_s =  \Pr\l(P_t h_{X_0}   \geq \theta I\mid P_t \neq 0 \r)\Pr(P_t \neq 0).
\end{equation}
Replacing the transmission power with its minimum value gives a lower bound on $P_s$ as follows:
\begin{align}
P_s &\geq \Pr\l(\frac{\beta P_c h_{X_0}}{1-\beta D}  > \theta I \r) \Pr(P_t \neq 0) \nn\\
&= \E\l[\exp\l(-\frac{\theta I (1 -\beta D)}{\beta P_c}\r)\r](1 - p_0).  \nn
\end{align}
Then the main result of the section follows by substituting the results derived in the preceding section.
\begin{theorem}[Network coverage]\label{Theo:Coverage}\emph{The \emph{success probability} is bounded as
\begin{equation}
P_s \geq (1-p_0) \mathcal{C}\l(\frac{\theta (1 -\beta D)}{\beta P_c}\r),
\end{equation}
where $\mathcal{C}\l(s\r)= \mathcal{C}_a\l(s\r) \mathcal{C}_b\l(s\r)$ is the interference characteristic functional with $\mathcal{C}_a$ and $\mathcal{C}_b$ given in Lemma~\ref{Lem:IntraInt} and Lemma~\ref{Lem:InterInt}, respectively, and $p_0$ is the power outage probability specified in Lemma~\ref{Lem:SigDist}.
}
\end{theorem}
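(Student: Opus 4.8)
The plan is to assemble the theorem directly from the decomposition already written in \eqref{Eq:Ps:Decomp} together with the three preceding lemmas; essentially no new heavy computation is needed, only a careful bounding step and one independence argument. First I would start from the factorization
$$P_s = \Pr\left(P_t h_{X_0} \geq \theta I \mid P_t \neq 0\right)\Pr(P_t \neq 0),$$
which isolates the event that the typical node actually transmits (i.e.\ the circuit-power constraint is met) from the SIR-threshold event. By Lemma~\ref{Lem:SigDist} the transmission power $P_t$, conditioned on being nonzero, is supported on $[\beta P_c/(1-\beta D),\infty)$, so its infimum over that support is $\beta P_c/(1-\beta D)$. Replacing $P_t$ by this smallest admissible value shrinks the success event and hence yields the lower bound
$$P_s \geq \Pr\left(\frac{\beta P_c}{1-\beta D}\, h_{X_0} > \theta I\right)\Pr(P_t \neq 0).$$

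Next I would evaluate the first factor. Since $h_{X_0}\sim\exp(1)$ is the signal-link fading and is independent of the interference $I$ (which carries only the interferer fadings $\{h_X\}$ together with the interferer positions), I would condition on $I$, use $\Pr(h_{X_0} > t) = e^{-t}$ for the exponential, and then take the expectation over $I$:
$$\Pr\left(h_{X_0} > \frac{\theta I (1-\beta D)}{\beta P_c}\right) = \E\left[\exp\left(-\frac{\theta(1-\beta D)}{\beta P_c}\,I\right)\right] = \mathcal{C}\left(\frac{\theta(1-\beta D)}{\beta P_c}\right),$$
which is precisely the characteristic functional $\mathcal{C}(s)=\E[e^{-sI}]$ evaluated at $s=\theta(1-\beta D)/(\beta P_c)$. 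Writing $\Pr(P_t\neq 0) = 1 - p_0$ with $p_0$ taken from Lemma~\ref{Lem:SigDist} then delivers the stated bound, provided I can establish the factorization $\mathcal{C}(s)=\mathcal{C}_a(s)\mathcal{C}_b(s)$.

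The one point that deserves care — and which I regard as the main (though modest) obstacle — is justifying that factorization, i.e.\ the independence of the intra-cluster interference $I_a$ in \eqref{Eq:IntraInt} and the inter-cluster interference $I_b$ in \eqref{Eq:InterInt}. This follows because, under Slivnyak's theorem applied at the typical node, the configuration of the node's own cluster (centered at $Y_0$) is generated independently of the clusters attached to all other PBs $Y\in\Pi$, and all fading marks are i.i.d.; hence $I = I_a + I_b$ is a sum of independent terms and $\E[e^{-sI}] = \E[e^{-sI_a}]\,\E[e^{-sI_b}] = \mathcal{C}_a(s)\mathcal{C}_b(s)$. Substituting the closed forms for $\mathcal{C}_a$ and $\mathcal{C}_b$ from Lemmas~\ref{Lem:IntraInt} and~\ref{Lem:InterInt} then completes the proof. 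I would also note in passing that the only inequality introduced is the replacement of $P_t$ by its conditional infimum, so the bound is tight precisely when the typical node operates near the circuit-power boundary.
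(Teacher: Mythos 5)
Your proposal is correct and follows essentially the same route as the paper: the same decomposition \eqref{Eq:Ps:Decomp}, the same lower bound obtained by replacing $P_t$ with its conditional infimum $\beta P_c/(1-\beta D)$, and the same use of the exponential fading $h_{X_0}$ to convert the SIR probability into the Laplace functional $\mathcal{C}(s)$ at $s=\theta(1-\beta D)/(\beta P_c)$. Your explicit justification of the factorization $\mathcal{C}(s)=\mathcal{C}_a(s)\mathcal{C}_b(s)$ via the conditional independence of the intra- and inter-cluster interference under Slivnyak's theorem is a detail the paper leaves implicit, but it is the right one.
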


\begin{remark}[Effects of $p_0$] \emph{The success probability is observed to increase \emph{linearly} with the \emph{transmission  probability} of a backscatter node, $(1 - p_0)$, which agrees with intuition.}
\end{remark}

\begin{remark}[Effects of $D$ and $\beta$ on network coverage] \label{Re:Coverage}\emph{The success probability $P_s$ can be maximized over the duty cycle $D$ and the reflection coefficient $\beta$. A too large or a too small values for each parameter both have a negative effect on network coverage (or the success probability).  A large duty cycle  can result in dense interferers and hence  strong interference but its being too small results in  long waiting period for each node, both reduce $P_s$. Consider $\beta$. On one hand,  increasing $\beta$ scales up transmission power for each node, which can lead to strong interference. On the other hand, $\beta$ being too small leads to  weak receive signal. Both decrease the  success probability.}
\end{remark}

\subsection{Network Capacity} In this section, we consider a WP-BC network with close-to-full network coverage such that transmitted data is always successfully received almost surely. Using \eqref{Eq:Ps:Decomp}, the successful probability can be approximated as $P_s \approx  1 - p_0$. Accordingly, the transmission capacity defined in \eqref{Eq:TXCap} reduces to the density of transmitting nodes:
\begin{equation}
C \approx \lambda_p \bar{c} D (1-p_0) .
\end{equation}
Substituting the results in Lemma~\ref{Lem:SigDist} gives Theorem~\ref{Theo:TxCap}.

\begin{theorem}[Network capacity]\label{Theo:TxCap}\emph{In the regime of  close-to-full network coverage, the network transmission capacity can be approximated  as follows.
\begin{itemize}
\item[--] ({\bf Matern cluster process})
\begin{equation}
C \approx  \frac{\lambda_p \bar{c} D}{a^2}\l[\frac{\eta g (1 - \beta D) }{P_c}\r]^{\frac{2}{\alpha_1}}. \nn
\end{equation}
\item[--] ({\bf Thomas cluster process})
\begin{equation}
C \approx  \lambda_p \bar{c} D \l[1 - \exp\l( -\frac{1}{\sigma^2}\l(\frac{\eta g (1-\beta D)  }{P_c}\r)^{\frac{2}{\alpha_1}}\r)\r]. \nn
\end{equation}
\end{itemize}
}
\end{theorem}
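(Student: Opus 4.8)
The plan is to collapse the capacity expression onto the single quantity $(1-p_0)$ and then substitute the power-outage probabilities already computed in Lemma~\ref{Lem:SigDist}. Starting from the definition $C = \lambda_p \bar{c} D P_s$ in \eqref{Eq:TXCap}, I would first invoke the decomposition of the success probability in \eqref{Eq:Ps:Decomp}, namely $P_s = \Pr(P_t h_{X_0} \geq \theta I \mid P_t \neq 0)\,\Pr(P_t \neq 0)$. The defining assumption of this subsection is the close-to-full coverage regime, in which the conditional SIR-success factor is essentially one; hence $P_s \approx \Pr(P_t \neq 0) = 1 - p_0$, and therefore $C \approx \lambda_p \bar{c} D (1 - p_0)$. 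Everything then reduces to inserting the explicit form of $p_0$ for each cluster model.

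For the Matern cluster process, Lemma~\ref{Lem:SigDist} gives $1 - p_0 = (d_0/a)^2$ whenever $d_0 < a$, which is the relevant branch for near-full coverage (otherwise $p_0 = 0$, the node always transmits, and the formula saturates). Using the distance threshold $d_0 = [\eta g (1-\beta D)/P_c]^{1/\alpha_1}$ from \eqref{Eq:Th:Dist}, I would square it to get $d_0^2 = [\eta g (1-\beta D)/P_c]^{2/\alpha_1}$, so that $1 - p_0 = a^{-2}[\eta g(1-\beta D)/P_c]^{2/\alpha_1}$; multiplying by $\lambda_p \bar{c} D$ yields the claimed Matern expression directly.

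For the Thomas cluster process, Lemma~\ref{Lem:SigDist} gives $p_0 = \exp(-d_0^2/(2\sigma^2))$, hence $1 - p_0 = 1 - \exp(-d_0^2/(2\sigma^2))$. Substituting the same $d_0^2 = [\eta g(1-\beta D)/P_c]^{2/\alpha_1}$ into the exponent and multiplying by $\lambda_p \bar{c} D$ produces the stated Thomas formula, up to the numerical constant appearing in the exponent.

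I expect no deep obstacle: once the coverage assumption collapses $P_s$ to $1-p_0$, the rest is a substitution. The only points requiring care are (i) making precise the sense in which the conditional success probability tends to one in the close-to-full coverage regime, since this is an approximation rather than an identity, and (ii) in the Matern case, checking that the operating point lies in the branch $d_0 < a$ so that the nontrivial formula applies rather than the saturated value $1 - p_0 = 1$. I would also reconcile the constant factor in the Thomas exponent against the $1/(2\sigma^2)$ appearing in Lemma~\ref{Lem:SigDist}.
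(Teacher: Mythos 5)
Your proposal matches the paper's own argument exactly: the paper likewise invokes the close-to-full-coverage regime to set $P_s \approx 1 - p_0$, writes $C \approx \lambda_p \bar{c} D (1-p_0)$, and substitutes the power-outage probabilities of Lemma~\ref{Lem:SigDist} with $d_0$ from \eqref{Eq:Th:Dist}. Your flag on the Thomas exponent is well taken: direct substitution of $p_0 = \exp\l(-d_0^2/(2\sigma^2)\r)$ yields $1/(2\sigma^2)$ rather than the $1/\sigma^2$ printed in the theorem, so the discrepancy lies in the stated result (an apparent typo), not in your derivation.
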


First of all, the transmission  capacity $C$ is observed to be proportional to the density of backscatter nodes that is consistent with intuition.

\begin{remark}[Effects of $D$ and $\beta$ on network capacity] \emph{The parameters affect the transmission capacity in the mentioned regime  by  varying the transmitting-node density. In comparison, their effects on network coverage are not entirely the same and  reflected in those on transmission probability and link reliability (see Remark~\ref{Re:Coverage}). In the  regime of almost-full network coverage,  $C$ decreases  with the growing reflection coefficient $\beta$. The reason is that  a large coefficient leads to less harvested energy and thereby reduces the transmitting-node density. In particular, $C$ scales with $\beta$  as $(1 - D\beta)^{\frac{2}{\alpha_1}}$. Next, increasing $D$ has two opposite effects on the transmitting-node density, namely increasing the backscatter-node density but reducing transmission probability due to less harvested energy. Therefore, the capacity can be optimized over $D$. For instance, for the model based on the Matern cluster process, the maximum capacity is
\begin{equation}
\max_D C(D) = \frac{\lambda_p \bar{c}\alpha_1}{a^2(2 + \alpha_1 \beta)}\l[\frac{2 \eta g }{P_c (2 + \alpha_1\beta)}\r]^{\frac{2}{\alpha_1}}
\end{equation}
and the optimal duty cycle is given as $D^* = \min\l(1, \frac{\alpha_1}{2+\alpha_1 \beta}\r)$. This assumes that $D^*$  is within the constrained range  discussed in the following remark.  The capacity optimization for the case of Thomas cluster process is similar but more tedious.
 }
\end{remark}

\begin{remark}[Constraints on $D$ and $\beta$] \label{Re:Cap}\emph{It is clear from Remark~\ref{Re:Coverage} that the consideration of the mentioned network operational regime constraints $D$ and $\beta$ to certain ranges to ensure link reliability. The capacity results in Theorem~\ref{Theo:TxCap} holds only for the parameters falling in these ranges.  The corresponding region for $(\beta, D)$ can be derived by bounding the conditional  probability in \eqref{Eq:Ps:Decomp} by a positive value close to one. For instance, using Theorem~$1$, an inner bound of  the region can be derived as
\begin{equation}
\l\{(D, \beta) \in [0, 1]^2  \mid \mathcal{C}\l(\frac{\theta (1 -\beta D)}{\beta P_c}\r) \geq 1 - \epsilon \r\}
\end{equation}
where the positive constant $\epsilon \approx 0$.}
\end{remark}

\section{Simulation Results}

The parameters for the simulation are set as follows unless stated otherwise. The PB transmission power $\eta$ = 40 dBm (10 W) and circuit power is  $P_{c}$ = 7 dBm. The SIR threshold is set as $\theta = -5$ dB in the typical range for ensuring almost-full network coverage (see e.g., \cite{28andrews2011tractable}). The path-loss exponents for WPT and communication links are  $\alpha_{1}$ = 3 and $\alpha_{2}$ = 3, respectively. The backscatter reflection coefficient is  $\beta = 0.6$ and duty cycle $D= 0.4$. The PB density  is $\lambda_{p} = 0.2\  /\textrm{m}^{2}$ and the expected  number of nodes in each cluster $\bar{c} = 3$. The transmission distances for D2D links are set as $1$ m. The network model based on the Thomas cluster process is assumed with the parameter $\sigma^{2}$ = 4.

\begin{figure}[t]
\centering
\subfigure[Effect of the reflection coefficient]{\includegraphics[width=8.5cm]{./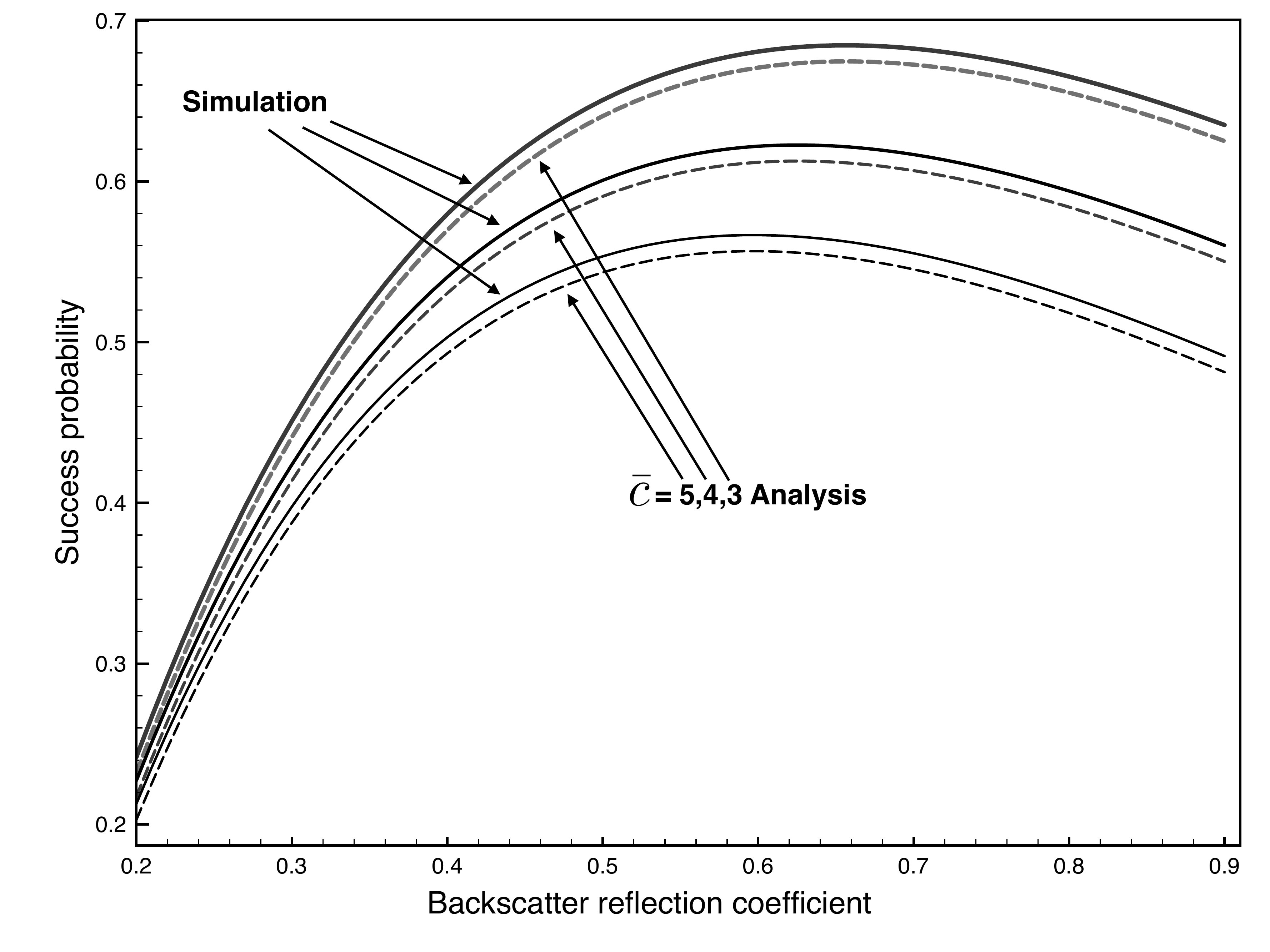}}
\subfigure[Effect of the duty cycle]{\includegraphics[width=8.5cm]{./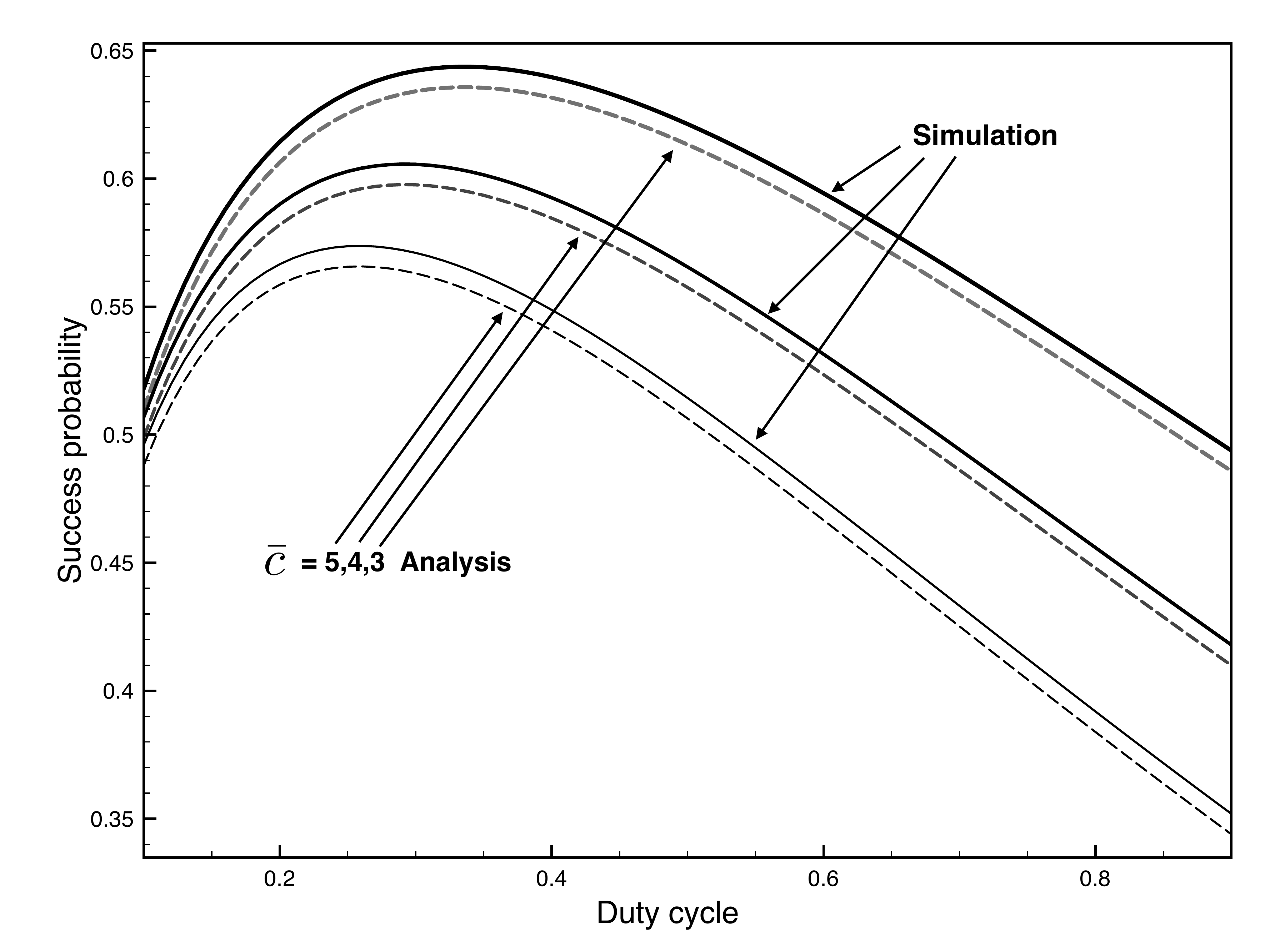}}
\caption{The effects of the backscatter parameters, namely the duty cycle and (backscatter) reflection coefficient,  on the success probability for a variable expected number of backscatter nodes per cluster,  $\bar{c} \in \{3,4,5\}$. }\label{cpSim}
\end{figure}

The curves of success probability versus the backscatter duty cycle and reflection coefficient are plotted in  Fig.~\ref{cpSim} for different values of $\bar{c}$. The curves based on the analytical results in Theorem~\ref{Theo:Coverage} are plotted for comparison. It is observed that the theoretical lower  bounds are tight. The curves show that the success probability is concave  functions of the backscatter parameters, which is consistent with the discussion in Remark~\ref{Re:Coverage}. The  optimal values for the reflection coefficient and duty cycle are observed to be about $0.6$ and $0.3 - 0.35$, respectively.

The curves of network transmission capacity  versus the PB density and the reflection coefficient are shown in Fig.~\ref{Fig:TxCap} for different values of $\bar{c}$. When the density of PB is relatively small, the network capacity is observed to grow linearly with the PB density. For a large PB density, the capacity saturates as the network becomes dense and interference limited. Next, the network capacity is observed to be a concave function of the reflection coefficient. In the region with $\beta \geq 0.6$ corresponding to high network coverage [see Fig.\ref{cpSim}(a)], the capacity decreases with growing $\beta$ that is consistent with Remark~\ref{Re:Cap}.

\section{Conclusion}
In this work, we have proposed the new network architecture, namely the WP-BC network,  for realizing dense backscatter communication networks using wireless power transfer enabled by PBs. A large-scale WP-BC network has been modeled using the PCP. Applying  stochastic geometry theory, the success probability and the transmission capacity have  been derived to quantify the performance of network coverage and capacity, respectively. In particular, the results  relate the network performance to  the backscatter parameters, namely the duty cycle and the reflection coefficient.

\begin{figure*}[t]
\centering
\subfigure[Effect of the PB density ]{\includegraphics[width=8.5cm]{./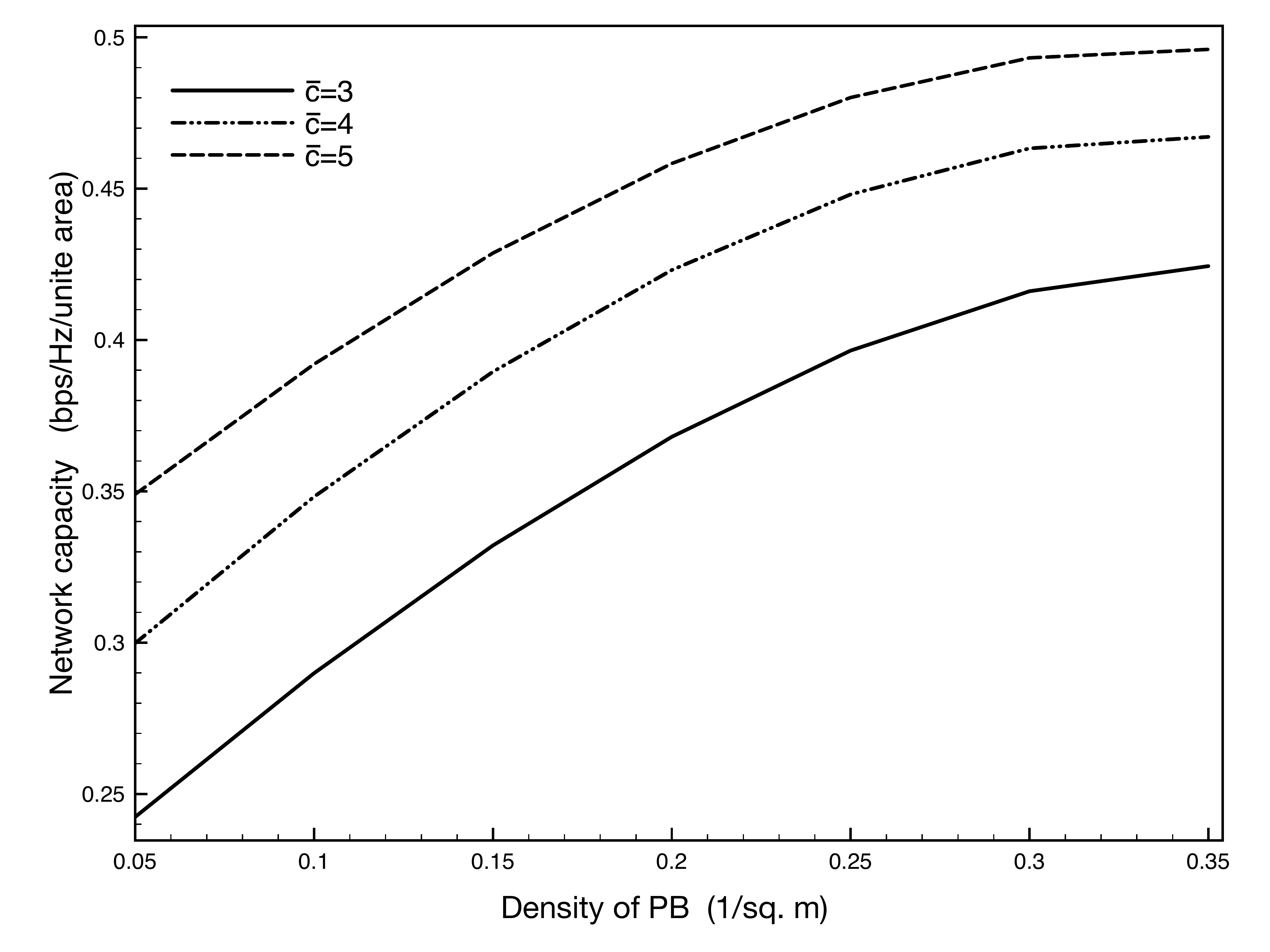}}
\subfigure[Effect of the reflection coefficient]{\includegraphics[width=8.5cm]{./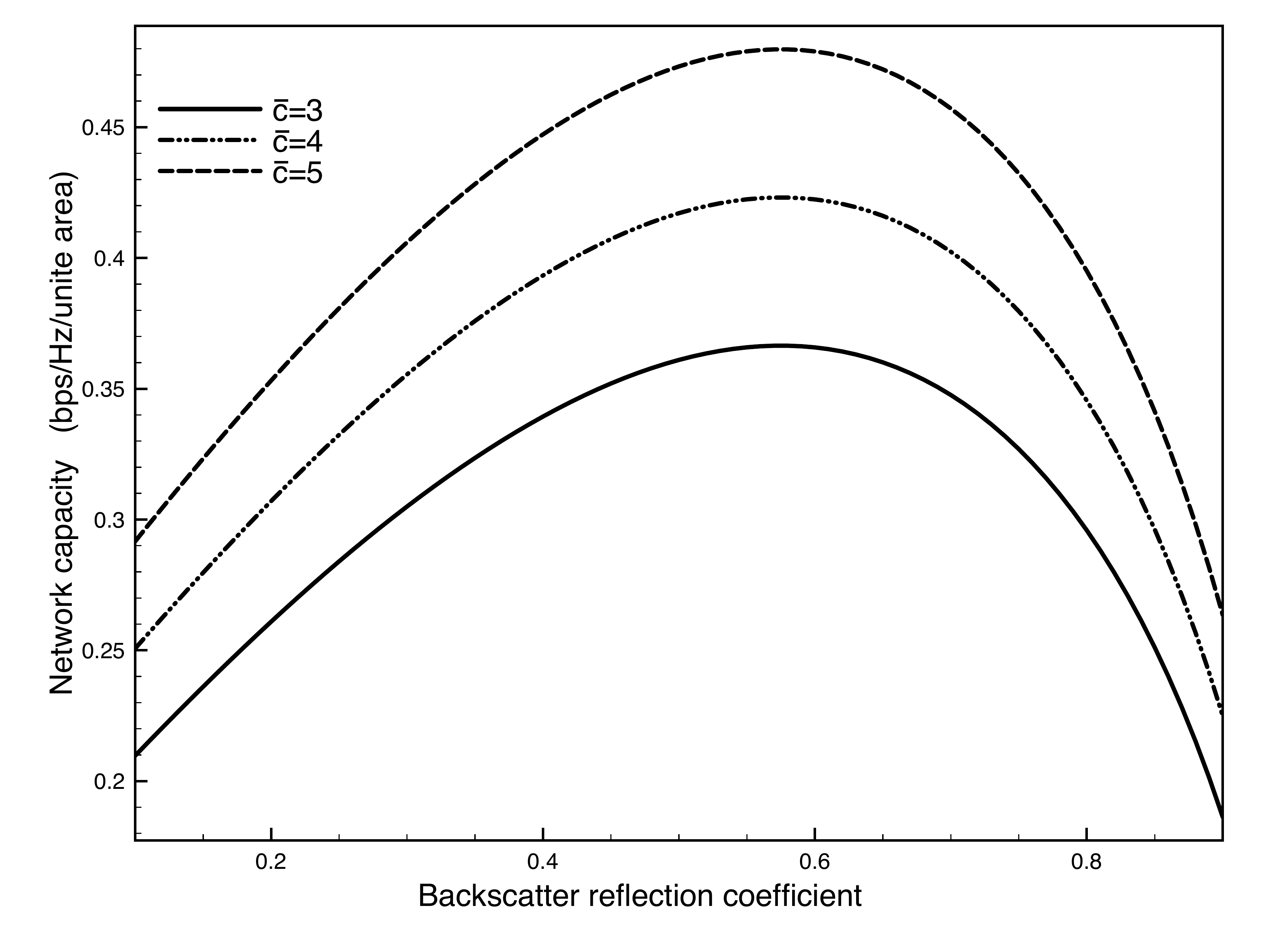}}
\caption{The effects of the PB density and reflection coefficient on the network capacity for a variable expected number of backscatter nodes per cluster,  $\bar{c} \in \{3,4,5\}$}
\label{Fig:TxCap}
\end{figure*}
%
%\bibliographystyle{ieeetr}
%\bibliography{BibDesk_File}

\end{document}